\newtheorem{Definition}{Definition}[section]
\newtheorem{Lemma}[Definition]{Lemma}
\newtheorem{Proposition}[Definition]{Proposition}
\newtheorem{Theorem}[Definition]{Theorem}
\newtheorem{Corollary}[Definition]{Corollary}
\title{On orthoposets of numerical events in quantum logic}
\author{Dietmar~Dorninger and Helmut~L\"anger}
\date{}
\begin{document}

\footnotetext[1]{Support of the research of the second author by the Austrian Science Fund (FWF), project 10.55776/PIN5424624, is gratefully acknowledged.}

\maketitle

\begin{abstract}
Let $S$ be a set of states of a physical system and $p(s)$ the probability of the occurrence  of an event when the system is in state $s\in S$. Such a function $p\colon S\to[0,1]$ is known as a numerical event or more accurately an $S$-probability. A set $P$ of numerical events including the constant functions $0$ and $1$ and $1-p$ with every $p\in P$ becomes a poset when ordered by the order of real functions and can serve as a general setting for quantum logics. We call such a poset $P$ a general set of events (GSE). The thoroughly investigated algebras of $S$-probabilities (including Hilbert logics), concrete logics and Boolean algebras can all be represented within this setting. In this paper we study various classes of GSEs, in particular those that are orthoposets and their interrelations and connections to known logics. Moreover, we characterize GSEs as posets by means of states and discuss the situation for GSEs to be lattices.
\end{abstract} 

{\bf AMS Subject Classification:} 06C15, 06E99, 03G12, 81P10

{\bf Keywords:} Quantum logic, numerical event, $S$-probability, state, orthoposet,  orthomodularity
 
\section{Introduction}

A crucial question in physics is whether one deals with a classical physical system or a quantum mechanical one. Among other possibilities, the difference can be described by means of the underlying quantum logic, which in classical physics is a Boolean algebra and in quantum physics a generalization of this concept in the form of special posets. In this paper we will focus on orthoposets of numerical events. Numerical events are defined as follows:

Let $S$ be a set of states of a physical system and $p(s)$ be the probability of the occurrence of an event (pertaining to a certain observable) when the system is in state $s\in S$. The function $p\colon S\to[0,1]$ is known as a {\em numerical event}, {\em multidimensional probability} or, more precisely, an {\em $S$-probability} (cf.\ \cite{BM91} and \cite{BM93}). $S$-probabilities are real functions that can be ordered. Thereby we will always assume that the obtained poset will contain the constant functions $0$ and $1$ for which we will use the same symbols as for the integers $0$ and $1$, respectively. The first class of posets we will specify this way is defined as follows:

\begin{Definition}\label{def1}
{\rm(\cite{BM91} and \cite{BM93})} Let $P$ be a set of $S$-probabilities comprising the constant functions $0$ and $1$ ordered by the partial order $\le$ of real functions and $+$ and $-$ denote the sum and difference of real functions, respectively. $P$ is called an {\rm algebra of $S$-probabilities} or an {\rm algebra of numerical events}, if it satisfies the following axioms:
\begin{enumerate}
\item[{\rm(1)}] $0\in P$;
\item[{\rm(2)}] $p':=1-p\in P$ for all $p\in P$;
\item[{\rm(3)}] if $p,q\in P$ are orthogonal, i.e.\ $p\le q'$, denoted by $p\perp q$, then $p+q\in P$;
\item[{\rm(4)}] if $p,q,r\in P$ and $p\perp q\perp r\perp p$, then $p+q+r\in P$.
\end{enumerate}
\end{Definition}

Obviously, due to $0\perp p\perp q\perp 0$, axiom (3) is a special case of axiom (4), but it will later be important to distinguish between axioms (3) and (4). If $p\perp q$ for $p,q\in P$ then $p+q$ is the supremum $p\vee q$ of $p$ and $q$, which is a consequence of axiom (4) (see \cite{MT}). If the infimum of two $S$-probabilities $p,q$ exists we will denote it by $p\wedge q$.
 
The definition of algebras of $S$-probabilities is motivated by classical event fields, for which the pairwise orthogonality of a triple $A,B,C$ of events implies $A\subseteq B'\cap C'=(B\cup C)'$, which for $S$-probabilities can be translated to $p\le1-(q+r)$.

As already pointed out by M.J.~M\c aczy\'nski and T.~Traczyk \cite{MT} an algebra of numerical events is an orthomodular poset which admits a full set of states and any orthomodular poset that admits a full set of states can is isomorphic to an algebra of numerical events. In particular, all Boolean logics and all Hilbert logics are algebras of $S$-probabilities.(For the question how Hilbert logics are represented as algebras of $S$-probabilities cf.\ e.g.\ \cite{DLM}.)

If axiom (4) is omitted in Definition~\ref{def1}, then the arising poset is known as a GFE ({\em generalized fields of events}) (cf.\ \cite D -- \cite{DL21}). In \cite D it was proved that if elements $p$ and $q$ of some GFE $P$ satisfy $p\le q$ then $q-p\in P$. If for a GFE $P$ $p\perp q$ implies $p+q=p\vee q$ for $p,q\in P$, then $P$ is an algebra of $S$-probabilities (cf.\ \cite{DL18}). Moreover, a GFE such that the values of its $S$-probabilities can only be $0$ or $1$ is already an algebra of $S$-probabilities. In this case the algebra of $S$-probabilities is a {\em concrete logic}, which is a logic that can be represented by sets (cf.\ \cite P).

We further mention that within algebras of $S$-probabilities any element different from $0$ and $1$ that is neither $\le1/2$ nor $\ge1/2$ is called {\em varying} (cf.\ \cite{DDL}). An $S$-probability will be called {\em proper}, if it is varying or equal to $0$ or $1$ (cf.\ \cite{DL18} and \cite{DL21}).

If axioms (3) and (4) are both omitted in Definition~\ref{def1} then we will call the arising poset a {\em general set of {\rm(}numerical{\rm)} events}, in short a GSE.

An important property of algebras of $S$-probabilities is that they are orthoposets which means that besides $p''= p$ ($\,'$ is an involution), $p\le q\Rightarrow q'\le p'$ for $p,q\in P$ ($\,'$ is antitone) also $p\vee p'=1$ and $p\wedge p'=0$ for all $p$ ($\,'$ is a complementation) (cf.\ e.g.\ \cite T).

A GSE $P$ of $S$-probabilities will be called {\em complemented} if $\,'$ is a complementation on the bounded poset $(P,\le,0,1)$. Clearly every GSE is a bounded poset with an antitone involution. Crucial is the following property:

\begin{Proposition}\label{prop1}
Let $P$ be a {\rm GSE}. Then $P$ is complemented and hence an orthoposet if and only if every element of $p$ is proper.	
\end{Proposition}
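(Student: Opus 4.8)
The plan is to exploit the fact that a GSE $P$ is, by construction, a bounded poset $(P,\le,0,1)$ on which $\,'$ acts as an antitone involution; consequently the only thing that can separate a complemented GSE from an arbitrary one is the validity of the two equations $p\vee p'=1$ and $p\wedge p'=0$ for every $p\in P$. I would prove the equivalence by translating these order-theoretic identities into pointwise statements about the real functions involved, the decisive observation being that any lower bound $r\in P$ of the pair $\{p,p'\}$ must satisfy $r(s)\le\min(p(s),1-p(s))\le1/2$ for every $s\in S$.

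For the direction asserting that properness of all elements implies that $P$ is complemented, I would start from an arbitrary $p\in P$ and let $r\in P$ be any lower bound of $p$ and $p'$. The inequality above shows $r\le1/2$ pointwise, so $r$ is not varying and is different from $1$; since $r$ is assumed proper, this forces $r=0$. Hence $0$ is the only lower bound of $\{p,p'\}$ and therefore $p\wedge p'=0$. Applying the antitone involution $\,'$, which turns infima into suprema, I would then obtain $p\vee p'=(p\wedge p')'=0'=1$, so $\,'$ is indeed a complementation and $P$ is an orthoposet. Note that this argument is uniform and needs no separate treatment of the cases $p\in\{0,1\}$.

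For the converse I would argue contrapositively: assume some $p\in P$ is not proper, i.e.\ $p\notin\{0,1\}$ and $p\le1/2$ or $p\ge1/2$ pointwise. Replacing $p$ by $p'$ if necessary (which is legitimate, since properness is preserved by $\,'$ and the two complementation equations for $p$ coincide with those for $p'$), I may assume $p\le1/2$ pointwise and $p\ne0$. Then $p\le1-p=p'$, so $p$ is itself the greatest lower bound of $\{p,p'\}$, giving $p\wedge p'=p\ne0$. Thus the orthocomplementation law fails and $P$ is not complemented.

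The steps are individually short, and I do not expect a serious obstacle. The one point that needs care, rather than a clever idea, is the distinction between the pointwise minimum of the functions $p$ and $p'$ and the poset-theoretic meet $p\wedge p'$ taken in the abstract poset $P$: the argument must verify that $0$ is the greatest lower bound in $P$ (equivalently, the only one), not merely that it is the pointwise infimum, and this is exactly where the definition of \emph{proper}/\emph{varying} enters. The De~Morgan passage from $p\wedge p'=0$ to $p\vee p'=1$ likewise relies only on $\,'$ being an order-reversing bijection, which is part of the GSE structure.
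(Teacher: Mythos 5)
Your proof is correct and follows essentially the same route as the paper's: both directions rest on the observations that an upper (resp.\ lower) bound of $\{p,p'\}$ is pointwise $\ge1/2$ (resp.\ $\le1/2$) and that a non-varying proper element comparable to $1/2$ must be $0$ or $1$. The only cosmetic difference is that you derive $p\vee p'=1$ from $p\wedge p'=0$ via the De~Morgan property of the antitone involution and phrase the other direction contrapositively, whereas the paper argues both halves dually and by contradiction; the content is the same.
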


\begin{proof}
Let $p,q\in P$. First assume $P$ to be complemented. Further, assume $p\ne0,1$. Then $p\le1/2$ would imply $p\le1/2\le p'$ and hence $p=p\wedge p'=0$, a contradiction. Dually, $p\ge1/2$ would imply $p'\le1/2\le p$ and hence $p=p\vee p'=1$, a contradiction. This proves $p$ to be proper. Conversely, assume every element of $P$ to be proper. Then $p,p'\le q$ implies $q\ge1/2$ and hence $q=1$ showing $p\vee p'=1$. Dually, $q\le p,p'$ implies $q\le1/2$ and hence $q=0$ showing $p\wedge p'=0$. Thus $P$ is complemented.
\end{proof}

Since every algebra of $S$-probabilities is a GFE and every GFE is a GSE this means that every algebra of $S$-probabilities is a GFE of proper $S$-probabilities, that is to say a complemented GFE, and those are GSEs of proper $S$-probabilities, i.e.\ complemented GSEs.

In this paper we will characterize complemented GSEs similarly to algebras of numerical events which are up to isomorphism exactly those orthomodular posets that admit a full set of states. We will study various classes of GSEs which have distinguished features of algebras of numerical events and constitute quantum logics of their own accord. In particular, posets of orthogonally composable $S$-probabilities and GSEs that are lattices will be discussed.

\section{A characterization of general sets of numerical events as posets}

\begin{Definition}\label{def2}
Let $\mathbf P=(P,\le,{}',0,1)$ be a bounded poset with an antitone involution and $m\colon P\to[0,1]$. Then $m$ is called a {\em state} on $\mathbf P$ if it satisfies the following conditions:
\begin{enumerate}
\item[{\rm(i)}] $m(0)=0$ and $m(1)=1$,
\item[{\rm(ii)}] $p,q\in P$ and $p\le q$ together imply $m(p)\le m(q)$,
\item[{\rm(iii)}] $m(p')=1-m(p)$ for all $p\in P$.
\end{enumerate}
Let $M$ be a set of states on $\mathbf P$. Then $M$ is called
\begin{enumerate}
\item[{\rm(iv)}] {\em full} if $p,q\in P$ and $m(p)\le m(q)$ for all $m\in M$ together imply $p\le q$,
\item[{\rm(v)}] {\em proper} if $p\in P$, $m_1,m_2\in M$, $m_1(p)\ne0$ and $m_2(p)\ne1$ together imply that there exist $m_3,m_4\in M$ with $m_3(p)<1/2<m_4(p)$.
\end{enumerate}	
\end{Definition}

\begin{Theorem}\label{th1}
Up to isomorphism the general sets of proper $S$-probabilities are exactly the bounded posets with an antitone involution having a full and proper set of states.	
\end{Theorem}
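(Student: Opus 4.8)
The plan is to prove both implications by means of the canonical evaluation map between a set of $S$-probabilities and its set of ``point states''.

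For the necessity direction, suppose $P$ is a GSE of proper $S$-probabilities on a set $S$. For each $s\in S$ I would define $m_s\colon P\to[0,1]$ by $m_s(p):=p(s)$ and take $M:=\{m_s\mid s\in S\}$. Conditions (i)--(iii) of Definition~\ref{def2} are immediate: $m_s(0)=0$ and $m_s(1)=1$ because $0,1$ are the constant functions, (ii) holds because $\le$ on $P$ is the pointwise order, and $m_s(p')=p'(s)=1-p(s)=1-m_s(p)$. Fullness is essentially a tautology here, since $m_s(p)\le m_s(q)$ for all $s\in S$ says exactly $p(s)\le q(s)$ for all $s$, i.e.\ $p\le q$. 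For properness of $M$, given $p$ and $m_{s_1},m_{s_2}\in M$ with $m_{s_1}(p)=p(s_1)\ne0$ and $m_{s_2}(p)=p(s_2)\ne1$, the first inequality rules out $p=0$ and the second rules out $p=1$; since $p$ is proper it must therefore be varying, so $p$ is neither $\le1/2$ nor $\ge1/2$, which yields $s_3,s_4\in S$ with $p(s_3)<1/2<p(s_4)$, and $m_3:=m_{s_3}$, $m_4:=m_{s_4}$ are as required.

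For the sufficiency direction, let $\mathbf P=(P,\le,{}',0,1)$ be a bounded poset with antitone involution carrying a full and proper set of states $M$. I would put $S:=M$ and define the representation $\varphi\colon P\to[0,1]^S$ by $\varphi(p)(m):=m(p)$. Each $\varphi(p)$ is an $S$-probability since $m(p)\in[0,1]$, and conditions (i) and (iii) give $\varphi(0)=0$, $\varphi(1)=1$ and $\varphi(p')=1-\varphi(p)=(\varphi(p))'$, so $Q:=\varphi(P)$ contains the constants $0,1$ and is closed under complementation; hence $Q$ is a GSE. That $\varphi$ is an order embedding is where the two halves of fullness enter: (ii) gives $p\le q\Rightarrow\varphi(p)\le\varphi(q)$, while fullness (iv) gives the converse implication, so $\varphi$ is injective and $\varphi\colon\mathbf P\to Q$ is an isomorphism of bounded posets with antitone involution.

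It remains to see that every element of $Q$ is proper, and this is the step where property (v) is used and, in my view, the only point requiring care. If $\varphi(p)=0$ or $\varphi(p)=1$ it is proper by definition. Otherwise $\varphi(p)\ne0$ produces some $m_1\in M$ with $m_1(p)\ne0$ and $\varphi(p)\ne1$ produces some $m_2\in M$ with $m_2(p)\ne1$; properness (v) then supplies $m_3,m_4\in M$ with $\varphi(p)(m_3)=m_3(p)<1/2<m_4(p)=\varphi(p)(m_4)$, so $\varphi(p)$ is neither $\le1/2$ nor $\ge1/2$, i.e.\ varying, hence proper. Thus $Q$ is a GSE of proper $S$-probabilities isomorphic to $\mathbf P$, which completes the proof. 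The genuine content of the argument is precisely the equivalence between the abstract condition (v) and the pointwise notion of a \emph{varying} $S$-probability; the remaining verifications that $\varphi$ respects the order, the bounds and the involution are routine. The degenerate case $M=\emptyset$, which by fullness forces $0=1$, corresponds to $S=\emptyset$ and the single empty $S$-probability and may be treated separately or excluded.
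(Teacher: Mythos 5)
Your proposal is correct and follows essentially the same route as the paper's proof: the evaluation maps $m_s(p)=p(s)$ for one direction and the representation $p\mapsto(m\mapsto m(p))$ for the other, with fullness giving the order embedding and condition (v) matching the pointwise notion of a proper (varying) $S$-probability. The only difference is cosmetic — you treat the case distinction for properness of $\varphi(p)$ and the degenerate case $M=\emptyset$ slightly more explicitly than the paper does.
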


\begin{proof}
First let $P$ be a general set of proper $S$-probabilities. Then $\mathbf P=(P,\le,{}',0,1)$ is a bounded poset with an orthocomplementation. For every $s\in S$ define $m_s\colon P\to[0,1]$ by $m_s(p):=p(s)$ for all $p\in P$ and put $M:=\{m_s\mid s\in S\}$. Let $t\in S$. Then
\begin{enumerate}
\item[(i)] $m_t(0)=0(t)=0$ and $m_t(1)=1(t)=1$,
\item[(ii)] $p,q\in P$ and $p\le q$ together imply $m_t(p)=p(t)\le q(t)=m_t(q)$,
\item[(iii)] $m_t(p')=p'(t)=1-p(t)=1-m_t(p)$ for all $p\in P$.
\end{enumerate}
This shows that $M$ is a set of states on $\mathbf P$. If $p,q\in P$ and $m_s(p)\le m_s(q)$ for all $s\in S$ then $p(s)\le q(s)$ for all $s\in S$, i.e.\ $p\le q$. This proves $M$ to be full. If $p\in P$, $s_1,s_2\in S$, $m_{s_1}(p)\ne0$ and $m_{s_2}(p)\ne1$ then $p(s_1)\ne0$ and $p(s_2)\ne1$ and hence $p\ne0,1$. Since $p$ is proper there exist $s_3,s_4\in S$ with $p(s_3)<1/2<p(s_4)$, i.e.\ $m_{s_3}(p)<1/2<m_{s_4}(p)$ showing $M$ to be proper. Altogether, we have shown that $\mathbf P$ is a bounded poset with an antitone involution having a full and proper set of states.

Conversely, let $\mathbf P=(P,\le,{}',0,1)$ be a bounded poset with an antitone involution having a full and proper set $S$ of states. Define $f\colon P\to[0,1]^S$ by $\big(f(p)\big)(s):=s(p)$ for all $p\in P$ and all $s\in S$. Since $S$ is a full set of states on $\mathbf P$, for every $p,q\in P$, $p\le q$ is equivalent to $f(p)\le f(q)$. This shows that $f$ is an isomorphism from $(P,\le)$ to $\big(f(P),\le\big)$ and hence also an isomorphism from $(P,\le,0,1)$ to $\big(f(P),\le,0,1\big)$. Moreover, we have
\[
\big(f(p')\big)(s)=s(p')=1-s(p)=1-\big(f(p)\big)(s)=\big(f(p)\big)'(s)
\]
for all $p\in P$ and all $s\in S$ and therefore $f(p')=\big(f(p)\big)'$ for all $p\in P$. This secures that $f$ is an isomorphism from $\mathbf P$ to $\big(f(P),\le,{}',0,1\big)$. Since $\mathbf P$ is a bounded poset with an antitone involution, $f(P)$ is a general set of $S$-probabilities. Assume $p\in P\setminus\{0,1\}$. Then there exist $s_1,s_2\in S$ with $p(s_1)\ne0$ and $p(s_2)\ne1$ from which we infer $s_1(p)=\big(f(p)\big)(s_1)\ne0$ and $s_2(p)=\big(f(p)\big)(s_1)\ne1$. Because $S$ is a proper set of states on $\mathbf P$ there exist $s_3,s_4\in S$ with $s_3(p)<1/2<s_4(p)$, i.e.\ $\big(f(p)\big)(s_3)<1/2<\big(f(p)\big)(s_4)$, ensuring that $f(p)$ is proper. Hence any element of $f(P)$ is proper. Altogether we have that $f(P)$ is a general set of proper $S$-probabilities.
\end{proof}
    
\section{Posets of orthogonally composable numerical events}  

\begin{Definition}
A {\em near-generalized field of events {\rm(NGFE)}} is a {\rm GSE} $P$ having the property that every element of $P\setminus\{0\}$ that is not an atom of $(P,\le)$ is the sum of two elements of $P\setminus\{0\}$.
\end{Definition}

\begin{Lemma}\label{lem2}
Let $P$ be a {\rm GFE}. Then $P$ is an {\rm NGFE}.
\end{Lemma}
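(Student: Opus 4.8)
The plan is to exploit the difference property of GFEs recalled in the introduction, namely that $p\le q$ in a GFE forces $q-p\in P$. Suppose $p\in P\setminus\{0\}$ is not an atom of $(P,\le)$. Since $p\neq0$ and $0$ is the bottom element, failing to be an atom means precisely that $p$ is not minimal in $P\setminus\{0\}$, so there exists some $q\in P$ with $0<q<p$ in the pointwise order. First I would fix such a $q$.

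Next I would form $r:=p-q$. Because $q\le p$, the cited result from \cite D guarantees $r\in P$. As real functions one then has $q+r=q+(p-q)=p$, so $p$ is exhibited as the sum of the two elements $q$ and $r$ of $P$. It remains to verify that neither summand is $0$: we have $q\neq0$ by the choice $q>0$, and $r\neq0$ because $r=0$ would give $p=q$, contradicting $q<p$. Hence $q,r\in P\setminus\{0\}$ and $p=q+r$, which is exactly the decomposition demanded by the definition of an NGFE. Since every GFE is in particular a GSE, this establishes the required property for all non-atomic elements of $P\setminus\{0\}$ and shows that $P$ is an NGFE.

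There is no serious obstacle here once the difference property of \cite D is in hand; the whole argument reduces to reading an intermediate element off the negation of ``atom'' and passing to the complementary difference. The only points needing attention are the two nonvanishing checks on the summands, and both follow at once from the strictness $0<q<p$. One may note in passing that the decomposition is automatically orthogonal, since $(p-q)'=1-p+q\ge q$ as $p\le1$, so $q\perp r$; orthogonality is, however, not required by the NGFE property and so plays no role in the proof.
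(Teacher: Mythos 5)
Your proof is correct and follows essentially the same route as the paper: take $q$ with $0<q<p$ (which exists since $p\neq0$ is not an atom), invoke the difference property of GFEs from \cite{D} to get $p-q\in P$, and write $p=q+(p-q)$ with both summands nonzero. The paper's version is just terser, leaving the nonvanishing checks and the appeal to \cite{D} implicit.
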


\begin{proof}
Let $p\in P\setminus\{0\}$ not being an atom of $P$. Then there exists some $q\in P$ with $0<q<p$. Now we have $p=q+(p-q)$ with $q,p-q\in P\setminus\{0\}$.
\end{proof}	

\begin{Definition}
Let $P$ be a {\rm GSE} and $p\in P$. The {\em element} $p$ is called {\em of finite length} if there is no infinite chain between $0$ and $p$. The {\em GSE} $P$ is said to be {\em of finite length} if every of its elements has this property.
\end{Definition}

\begin{Lemma}\label{lem3}
Let $P$ be an {\rm NGFE} and $p\in P\setminus\{0\}$ of finite length. Then $p$ is the sum of finitely many atoms of $P$.
\end{Lemma}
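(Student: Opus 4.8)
The plan is to argue by Noetherian (well-founded) induction on the interval $[0,p]$. First I would observe that the finite-length hypothesis says precisely that $[0,p]$ contains no infinite chain, and in particular no infinite strictly descending sequence; hence the strict order $<$ restricted to $[0,p]$ is well-founded. This is exactly what licenses an induction in which each element is reduced to strictly smaller ones, and it guarantees that the recursive splitting described below terminates.

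The statement I would actually prove, for every $x\in[0,p]$, is that either $x=0$ or $x$ is the sum of finitely many atoms of $P$; the Lemma is then the instance $x=p$ together with $p\neq0$. For the inductive step, fix $x\in[0,p]$ and assume the claim for every $y<x$. If $x=0$ or $x$ is an atom of $P$ there is nothing to do, since an atom is the sum of a single atom. Otherwise $x$ is a non-atom in $P\setminus\{0\}$, so the defining property of an NGFE yields $q,r\in P\setminus\{0\}$ with $x=q+r$.

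The small but essential observation is that $q,r<x$. Since $q$ and $r$ are $S$-probabilities they are pointwise nonnegative, so $x=q+r$ forces $q\le x$ and $r\le x$ in the functional order, which is the order of the poset. Because $r\neq0$ there is an $s\in S$ with $r(s)>0$, whence $q(s)<x(s)$ and so $q\neq x$; thus $q<x$, and symmetrically $r<x$. In particular $q,r\in[0,p]$, so the induction hypothesis applies: each of $q$ and $r$, being nonzero, is a finite sum of atoms of $P$. Concatenating these two decompositions expresses $x=q+r$ as a finite sum of atoms, completing the step and hence, at $x=p$, the Lemma.

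I expect the only genuine subtlety to be the correct use of the finite-length hypothesis. It must be invoked as well-foundedness of $<$ on $[0,p]$ (absence of an infinite descending chain) rather than as a uniform bound on chain lengths, since ``no infinite chain'' need not entail such a bound; Noetherian induction sidesteps this issue entirely. Everything else is routine: that the pointwise order is the poset order, that a minimal nonzero element of $P$ lying in $[0,p]$ is automatically an atom both of $P$ and of $[0,p]$, and that two finite lists of atoms combine into one.
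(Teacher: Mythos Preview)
Your proof is correct and follows essentially the same recursive splitting idea as the paper: repeatedly invoke the NGFE property to decompose a non-atom into two strictly smaller nonzero pieces, with termination guaranteed by the finite-length hypothesis. Your formulation via Noetherian induction on $[0,p]$ is tidier and more explicit than the paper's informal ``this procedure must terminate'' --- in particular you actually verify that the summands $q,r$ satisfy $q,r<x$, which the paper leaves unstated --- but the underlying argument is the same.
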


\begin{proof}
If $p$ is an atom of $P$, we are done. Otherwise there exist $p_1,p_2\in P\setminus\{0\}$ with $p_1+p_2=p$. If $p_1$ and $p_2$ are atoms of $P$, we are done. If $p_1$ is not an atom of $P$ then there exist $p_3,p_4\in P\setminus\{0\}$ with $p_3+p_4=p_1$ and we have $p=p_3+p_4+p_2$. If $p_2$ is not an atom of $P$ then there exist $p_5,p_6\in P\setminus\{0\}$ with $p_5+p_6=p_2$ and we have $p=p_1+p_5+p_6$. Since $p$ is of finite length, this procedure must terminate after a finite number of steps.
\end{proof}

\begin{Definition}
For a {\rm GSE} $P$ we define the following property:
\begin{itemize}
\item[{\rm(P)}] $p,q\in P$ and $p\le q$ together imply $q-p\in P$.
\end{itemize}
\end{Definition}

\begin{Lemma}\label{lem4}
Let $P$ be a {\rm GSE}. Then the following are equivalent:
\begin{enumerate}
\item[{\rm(i)}] $P$ is a {\rm GFE},
\item[{\rm(ii)}] $P$ has property {\rm(P)}.		
\end{enumerate}
\end{Lemma}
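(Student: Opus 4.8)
The plan is to exploit the single structural feature that both conditions share: in any GSE the antitone involution $'$ (closure under $p\mapsto p'=1-p$, axiom (2)) converts sums of orthogonal elements into differences of comparable elements and conversely. First I would record two elementary identities of real functions. Whenever $p\le q$ one has $(p+q')'=1-p-(1-q)=q-p$, and whenever $p\perp q$, i.e.\ $p\le q'$, one has $(p+q)'=1-p-q=(1-q)-p=q'-p$. In each case a short check shows the function formed lies in $[0,1]$: from $p\le q=(q')'$ together with $0\le p,q\le1$ we get $0\le p+q'\le1$, and from $p\le q'$ we get $0\le q'-p\le1$. Thus the auxiliary functions $p+q'$ and $q'-p$ are genuine $S$-probabilities, so the closure axioms may legitimately be applied to them.

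For the implication (i)$\Rightarrow$(ii) I would assume $P$ is a GFE and take $p\le q$ in $P$. Then $p\le q=(q')'$ says exactly that $p\perp q'$, so axiom (3) of a GFE gives $p+q'\in P$. Applying the involution (axiom (2)) and the first identity yields $q-p=(p+q')'\in P$, which is property (P). This direction also recovers the result of \cite D quoted earlier in the paper.

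For the converse (ii)$\Rightarrow$(i) I would assume property (P) and take an orthogonal pair $p\perp q$, i.e.\ $p\le q'$. Since $q'\in P$ by closure under $'$, property (P) applied to the comparable pair $p\le q'$ gives $q'-p\in P$. The second identity rewrites this as $(p+q)'\in P$, and a further application of the involution returns $p+q\in P$. Since $p,q$ were an arbitrary orthogonal pair, $P$ satisfies axiom (3) and is therefore a GFE.

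I do not expect a genuine obstacle here: the entire equivalence reduces to the observation that $'$ interchanges the roles of $+$ for orthogonal pairs and $-$ for comparable pairs, so the two directions are mirror images of one another. The only point deserving a moment's care is the verification that the intermediate functions $p+q'$ and $q'-p$ really take values in $[0,1]$, since otherwise the closure axioms would not apply; but this is immediate from the order relations $p\le q$ (respectively $p\le q'$) combined with $0\le p,q\le1$.
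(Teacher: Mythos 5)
Your proof is correct and follows essentially the same route as the paper: your (ii)$\Rightarrow$(i) argument is literally the paper's ($p\le q'$, apply (P) to get $q'-p\in P$, then complement), and for (i)$\Rightarrow$(ii) the paper simply cites \cite{D}, whereas you supply the standard mirror-image argument ($p\le q$ means $p\perp q'$, apply axiom (3), then complement), which is the intended one. Your extra check that the intermediate functions lie in $[0,1]$ is a harmless but welcome precaution.
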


\begin{proof}
Let $p,q\in P$. \\
(i) $\Rightarrow$ (ii): \\
This was proved in \cite D. \\
(ii) $\Rightarrow$ (i): \\
Assume $p\perp q$, i.e.\ $p\le1-q$. Since $P$ has property (P), we obtain $(1-q)-p\in P$ and hence $p+q=1-\big((1-q)-p\big)\in P$.
\end{proof}

\begin{Theorem}
Let $P$ be an {\rm NGFE} of finite length. Then the following are equivalent:
\begin{enumerate}
\item[{\rm(i)}] $P$ is an algebra of $S$-probabilities,
\item[{\rm(ii)}] any sum of finitely many pairwise orthogonal atoms of $P$ belongs to $P$.
\end{enumerate}
\end{Theorem}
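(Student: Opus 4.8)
The plan is to prove the two implications separately, in both cases relying on the elementary observation that if $g_1,\dots,g_n$ are nonnegative real functions on $S$ with $g_1+\cdots+g_n\le1$ pointwise, then $g_i+g_j\le1$ for all $i\ne j$, simply because the omitted summands are nonnegative. For elements of $P$ this says exactly that they are pairwise orthogonal, so the observation turns a bound on a total sum into the pairwise orthogonality needed to invoke axioms (3) and (4). This is the common technical backbone of both directions.

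For (i)~$\Rightarrow$~(ii) I would assume $P$ to be an algebra of $S$-probabilities and prove by induction on $n$ that every sum $a_1+\cdots+a_n$ of pairwise orthogonal atoms lies in $P$. The cases $n=1,2,3$ are immediate from $a_1\in P$, axiom~(3) and axiom~(4). For $n\ge4$ put $b:=a_1+\cdots+a_{n-2}$, which belongs to $P$ by the induction hypothesis, and check that $b,a_{n-1},a_n$ are pairwise orthogonal so that axiom~(4) yields $b+a_{n-1}+a_n=a_1+\cdots+a_n\in P$. Here $a_{n-1}\perp a_n$ is given, while $b\perp a_{n-1}$ and $b\perp a_n$ follow because $a_1+\cdots+a_{n-1}$ and $a_1+\cdots+a_{n-2}+a_n$ are sums of $n-1$ pairwise orthogonal atoms, hence lie in $P$ and are thus $\le1$ pointwise, again by the induction hypothesis.

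For (ii)~$\Rightarrow$~(i), since $P$ is already a GSE it suffices to establish axioms (3) and (4); here the hypotheses NGFE and finite length enter through Lemma~\ref{lem3}, by which every nonzero element of $P$ is a finite sum of atoms. Given $p\perp q$, write $p=a_1+\cdots+a_k$ and $q=b_1+\cdots+b_l$; since $p+q\le1$ pointwise, the observation shows the whole family $a_1,\dots,a_k,b_1,\dots,b_l$ to be pairwise orthogonal, and (ii) gives $p+q\in P$, which is axiom~(3). For axiom~(4), given $p\perp q\perp r\perp p$ I would decompose each of $p,q,r$ into atoms and verify that the entire family of constituent atoms is pairwise orthogonal: atoms from the same element because their sum is bounded by $1$, an atom of $p$ and an atom of $q$ because $a_i+b_j\le p+q\le1$, and likewise for the pairs $p,r$ and $q,r$. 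Then (ii) yields $p+q+r\in P$; note that one must argue pairwise orthogonality only from the pairwise bounds $p+q,p+r,q+r\le1$, never from the triple sum, which is not known in advance.

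I expect the main obstacle to be the induction in (i)~$\Rightarrow$~(ii), precisely the point that the orthogonality of the partial sum $b$ with the two leftover atoms is not available directly but has to be recovered from the induction hypothesis applied to shorter sums. A secondary point worth stressing is that condition~(ii) is exactly what precludes the pathology of three pairwise orthogonal elements whose pointwise sum exceeds $1$: under (ii) the associated sum of atoms is forced into $P\subseteq[0,1]^S$, so such a configuration cannot arise and axiom~(4) becomes available with no extra feasibility check.
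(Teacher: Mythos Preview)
Your proposal is correct. The direction (ii)~$\Rightarrow$~(i) is essentially the paper's argument: decompose the elements into atoms via Lemma~\ref{lem3}, observe that the pairwise bounds $p+q,\,q+r,\,r+p\le1$ force all constituent atoms to be pairwise orthogonal, and invoke (ii). (The paper handles the degenerate cases where one or more of $p,q,r$ is $0$ explicitly; you should mention this, but it is trivial.)

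For (i)~$\Rightarrow$~(ii) you take a genuinely different route. The paper uses the fact, quoted in the introduction from \cite{MT}, that in an algebra of $S$-probabilities $p\perp q$ implies $p+q=p\vee q$; it then proves the stronger statement that any finite sum of pairwise orthogonal elements equals their join, getting the orthogonality of $a_1+\cdots+a_n$ with $a_{n+1}$ from $a_1\vee\cdots\vee a_n\le a_{n+1}'$. Your argument stays closer to the raw axioms: you recover the orthogonality of the partial sum $b=a_1+\cdots+a_{n-2}$ with $a_{n-1}$ and $a_n$ by applying the induction hypothesis to the shorter sums $a_1+\cdots+a_{n-1}$ and $a_1+\cdots+a_{n-2}+a_n$, which lie in $P\subseteq[0,1]^S$ and are therefore $\le1$ pointwise, then finish with axiom~(4). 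The paper's approach is slightly slicker and yields the extra information that the sum is the join; yours is more self-contained, needing no external result about orthogonal sums being suprema, at the cost of a strong induction that calls on both $n-1$ and $n-2$.
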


\begin{proof}
$\mbox{}$ \\	
(i) $\Rightarrow$ (ii): \\
As mentioned in the introduction, the sum of two orthogonal elements of $P$ coincides with their supremum. Now let $n$ be an integer $>1$, assume that the sum of $n$ pairwise orthogonal elements of $P$ coincides with their supremum in $P$ and let $a_1,\dots,a_{n+1}$ be pairwise orthogonal elements of $P$. Since $a_i\le a_{n+1}'$ for all $i=1,\dots,n$ we have $a_1+\dots+a_n=a_1\vee\cdots\vee a_n\le a_{n+1}'$ and hence $a_1+\dots+a_{n+1}=(a_1+\dots+a_n)+a_{n+1}=(a_1\vee\cdots\vee a_n)\vee a_{n+1}\in P$. This shows that $P$ is closed under the sum of finitely many orthogonal elements. \\
(ii) $\Rightarrow$ (i): \\
Let $p,q,r\in P$ and assume $p\perp q\perp r\perp p$. If two of $p,q,r$ are $0$ then $p+q+r\in\{p,q,r\}\subseteq P$. Hence assume that at most one of $p,q,r$ equals $0$. Suppose $r=0$. According to Lemma~\ref{lem3} there exist positive integers $k$ and $m$ and atoms $a_1,\dots,a_{k+m}$ of $P$ with
\begin{eqnarray*}
& & p=a_1+\dots+a_k, \\
& & q=a_{k+1}+\dots+a_{k+m}.
\end{eqnarray*}
Since $p+q\le1$, the atoms $a_1,\dots,a_{k+m}$ of $P$ are pairwise orthogonal and hence $p+q+r=p+q=a_1+\dots+a_{k+m}\in P$. The cases $p=0$ and $q=0$ can be treated in a similar way. Now assume $p,q,r\ne0$. Again according to Lemma~\ref{lem3} there exist positive integers $k$, $m$ and $n$ and atoms $a_1,\dots,a_{k+m+n}$ of $P$ with
\begin{eqnarray*}
& & p=a_1+\dots+a_k, \\
& & q=a_{k+1}+\dots+a_{k+m}, \\
& & r=a_{k+m+1}+\dots+a_{k+m+n}.
\end{eqnarray*}
Since $p+q,q+r,r+p\le1$ the atoms $a_1,\dots,a_{k+m+n}$ of $P$ are pairwise orthogonal showing $p+q+r=a_1+\dots+a_{k+m+n}\in P$.
\end{proof}

Though the notion of a concrete logic is mostly used in connection with orthomodular posets which can be represented by sets we will also use this concept for NGFEs which can be represented by subsets of a set.

\begin{Definition}
Let $P$ be a {\rm NGFE} of finite length and $A_p$ for $p\in P\setminus\{0\}$ be a set of atoms of $P$ summing up to $p$ according to Lemma~\ref{lem3}. Then we define for $P$ the following property:
\begin{enumerate}
\item[{\rm(U)}] For every $p,q\in P\setminus\{0\}$, $p\le q$ is equivalent to $A_p\subseteq A_q$ for any $A_p,A_q$.
\end{enumerate}
\end{Definition} 

\begin{Proposition}\label{prop2}
Let $P$ be an {\rm NGFE} of finite length having property {\rm(U)}. Then every $p\in P\setminus\{0\}$ is the unique sum of atoms of $P$ and $P$ is a concrete logic.
\end{Proposition}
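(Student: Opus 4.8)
The plan is to represent every nonzero element of $P$ by its set of atoms and to show that the resulting map onto subsets of the set $X$ of all atoms of $P$ is an isomorphism onto a family of subsets closed under set complementation; this is exactly what is needed for $P$ to be a concrete logic in the (broadened) sense adopted just before the proposition. I would carry this out in three moves: a uniqueness lemma, the identification of the top element with $X$, and the matching of the antitone involution with set complementation.

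The core observation, from which both assertions flow, is the following uniqueness statement: for $p\in P\setminus\{0\}$, any set $B$ of atoms with $\sum_{a\in B}a=p$ equals $A_p$. To see this I would first note that for an atom $a$ one has $A_a=\{a\}$, since every summand in a representation of $a$ lies below $a$ and, being a nonzero atom, must equal $a$. Now if $a\in B$ then $a\le p$, so property (U) applied to the pair $A_a=\{a\}$ and $A_p$ gives $a\in A_p$; hence $B\subseteq A_p$. If $B\ne A_p$, then $\sum_{a\in A_p\setminus B}a=p-p=0$, which is impossible for a nonempty set of nonzero atoms. Thus $B=A_p$, so the representation of $p$ as a sum of atoms is unique, and $A_p$ is unambiguously defined.

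Next I would pin down the top element. For every atom $a$ we have $a\le 1$, so (U) yields $\{a\}=A_a\subseteq A_1$, i.e.\ $a\in A_1$; therefore $A_1=X$. Since $A_1$ is finite by Lemma~\ref{lem3}, the set $X$ of all atoms is finite and $1=\sum_{a\in X}a$. I then define $\varphi\colon P\to 2^X$ by $\varphi(0):=\emptyset$ and $\varphi(p):=A_p$ for $p\ne 0$. Property (U) (with the evident verification for the element $0$) shows that $p\le q$ is equivalent to $\varphi(p)\subseteq\varphi(q)$, so $\varphi$ is injective and an order isomorphism onto $\mathcal L:=\varphi(P)$ sending $0$ and $1$ to $\emptyset$ and $X$.

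The decisive step is to prove $\varphi(p')=X\setminus\varphi(p)$. I would avoid arguing about meets directly and instead exploit the representation of the top just obtained: since $A_p\subseteq X$ and $1=\sum_{a\in X}a$, subtraction gives
\[
p'=1-p=\sum_{a\in X}a-\sum_{a\in A_p}a=\sum_{a\in X\setminus A_p}a,
\]
so $X\setminus A_p$ is a set of atoms summing to $p'$; by the uniqueness lemma, $A_{p'}=X\setminus A_p$. Hence $\varphi$ preserves the involution, and $\mathcal L$ is a family of subsets of $X$, ordered by inclusion, closed under complementation, with bottom $\emptyset$ and top $X$; thus $\varphi$ exhibits $P$ as (isomorphic to) a concrete logic. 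I expect this last step to be the main obstacle: the naive route of deducing $A_p\cap A_{p'}=\emptyset$ from $p\wedge p'=0$ is blocked because $P$ need not be complemented (atoms below $1/2$ are not excluded by the hypotheses), and the whole point is that the arithmetic identity $p'=1-p$ sidesteps this difficulty, yielding disjointness automatically from the unique representation of $1$ by the full set of atoms.
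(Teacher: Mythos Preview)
Your argument is correct. Two remarks on how it compares with the paper's own proof.

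For uniqueness, the paper exploits property {\rm(U)} more directly than you do: given two representations $A_p$ and $\overline A_p$ of the same $p$, the reflexive instance $p\le p$ already forces $A_p\subseteq\overline A_p$ and $\overline A_p\subseteq A_p$, since {\rm(U)} is quantified over \emph{any} choice of representing sets. Your detour through $A_a=\{a\}$ and the arithmetic cancellation $\sum_{a\in A_p\setminus B}a=0$ is valid (these are nonnegative functions, so the containment $b_i\le p$ for summands and the nonvanishing of a nonempty sum of atoms are clear), but it is more work than needed.

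For the ``concrete logic'' conclusion, you go considerably further than the paper. The paper stops after observing that $p\le 1$ gives $A_p\subseteq A_1$, so every nonzero element is represented by a subset of $A_1$; that is all it claims under its broadened use of the term. You additionally prove $A_1$ exhausts the atoms, build the explicit order embedding $\varphi$, and verify $\varphi(p')=X\setminus\varphi(p)$ via the arithmetic identity $1-p=\sum_{a\in X\setminus A_p}a$ together with uniqueness. This extra step is not in the paper, but your observation that it must be done arithmetically rather than via $p\wedge p'=0$ (since complementation is not assumed) is well taken and makes your version self-contained as a genuine set-representation.
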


\begin{proof}
Assume $p$ is the sum of the elements of $A_p$ as well as the sum of the elements of the set of atoms $\overline A_p$. Then because of $p\le p$ we have $A_p\subseteq\overline A_p$ and $\overline A_p\subseteq A_p$ which means that $A_p$ is unique. Therefore $A_p\subseteq A_1$ for every $p\in P\setminus\{0\}$ which shows that every $p$ can be represented by a subset of $A_1$ in accordance with the definition of a concrete logic.
\end{proof}

\section{Lattices of numerical events}

For a GFE $P$ we define the following property, which originally had been formulated for algebras of $S$-probabilities (cf.\ \cite{DDL} and \cite D).
\begin{enumerate}
\item[(T)] For all $p,q\in P$ there exists a unique $r\in P$ satisfying $r\ge p,q$ and $(r-p)\wedge(r-q)=0$.
\end{enumerate}

\begin{Theorem}\label{th4}
A {\rm GFE} $P$ has property {\rm(T)} if and only if it is a lattice.
\end{Theorem}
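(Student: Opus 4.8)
The plan is to identify the element $r$ occurring in property (T) with the supremum $p\vee q$, and to reduce ``lattice'' to ``all binary suprema exist.'' The latter reduction is immediate: since ${}'$ is an antitone involution on the bounded poset $P$, whenever $p'\vee q'$ exists so does $p\wedge q=(p'\vee q')'$, and hence $P$ is a lattice as soon as every pair has a supremum (the converse being trivial). Throughout I would use property (P) from Lemma~\ref{lem4}: since $P$ is a GFE, $p\le q$ forces $q-p\in P$, so all the differences $r-p$, $r-q$ appearing in (T) indeed lie in $P$. The technical core is the claim that, for $p,q,r\in P$ with $r\ge p,q$, one has $(r-p)\wedge(r-q)=0$ if and only if $r=p\vee q$.

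For the direction lattice $\Rightarrow$ (T), set $r:=p\vee q$. To see $(r-p)\wedge(r-q)=0$ I would take any common lower bound $s\in P$ of $r-p$ and $r-q$; from $s\le r-p\le r$, property (P) gives $r-s\in P$, and $p\le r-s$, $q\le r-s$ show that $r-s$ is an upper bound of $\{p,q\}$, so minimality of the supremum yields $r\le r-s$, i.e.\ $s=0$. Thus $0$ is the only common lower bound and the infimum equals $0$. For uniqueness, if $r_1\ge p,q$ also satisfies $(r_1-p)\wedge(r_1-q)=0$, then $r_1\ge p\vee q=r$, so $r_1-r\in P$ by (P); since $p,q\le r$ we get $r_1-r\le r_1-p$ and $r_1-r\le r_1-q$, whence $r_1-r\le(r_1-p)\wedge(r_1-q)=0$ and $r_1=r$. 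This settles one implication completely.

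For (T) $\Rightarrow$ lattice I let $r$ denote the unique element furnished by (T) for a given pair $p,q$ and aim to prove $r=p\vee q$. Two facts come cheaply. First, $r$ is a \emph{minimal} upper bound: if $p,q\le t\le r$ then $r-t\in P$ by (P), and $r-t\le r-p,\,r-q$ forces $r-t\le(r-p)\wedge(r-q)=0$, so $t=r$. Second, \emph{every} minimal upper bound $m$ of $\{p,q\}$ satisfies $(m-p)\wedge(m-q)=0$: any common lower bound $s$ of $m-p$ and $m-q$ produces the upper bound $m-s\le m$, which by minimality equals $m$, forcing $s=0$. Combining the second fact with the uniqueness clause of (T), there is at most one minimal upper bound; together with the first fact, $r$ is the unique minimal upper bound of $\{p,q\}$.

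The main obstacle is upgrading ``$r$ is the unique minimal upper bound'' to ``$r$ is the \emph{least} upper bound,'' i.e.\ showing $r\le t$ for every upper bound $t$ of $\{p,q\}$; in a poset without chain conditions a unique minimal upper bound need not be a supremum. The route I would pursue is to exhibit, for an arbitrary upper bound $t$, an element $r_t$ with $p,q\le r_t\le t$ and $(r_t-p)\wedge(r_t-q)=0$: by the uniqueness in (T) this $r_t$ must coincide with $r$, giving $r=r_t\le t$. The natural candidate is $r_t:=t-\big[(t-p)\wedge(t-q)\big]$, which in the chain case correctly returns $p\vee q$; the delicate point, and the step I expect to cost the most work, is securing the existence of the auxiliary infimum $(t-p)\wedge(t-q)$ inside $P$ (equivalently, via ${}'$, the supremum $(t'+p)\vee(t'+q)$). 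This is exactly where property (T) and property (P) must be leveraged, rather than pointwise minima, since $P$ need not be closed under pointwise infima of functions.
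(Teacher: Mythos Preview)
Your lattice $\Rightarrow$ (T) direction is fine and essentially matches the paper: set $r=p\vee q$, push any common lower bound $s$ of $r-p$ and $r-q$ to $0$ via $p,q\le r-s$, and get uniqueness by noting that $v>r$ would make $v-r$ a positive common lower bound of $v-p$ and $v-q$.

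The gap is in (T) $\Rightarrow$ lattice. You correctly isolate the difficulty: ``unique minimal upper bound'' does not automatically give ``least upper bound,'' and your candidate $r_t=t-\big[(t-p)\wedge(t-q)\big]$ presupposes exactly the infimum whose existence is at stake. You do not close this, and there is no evident way to force $(t-p)\wedge(t-q)$ to exist from (T) and (P) alone along the lines you sketch.

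The paper avoids this entirely by a \emph{shift} trick. Given any upper bound $s$ of $\{p,q\}$, one has $p,q\perp s'$, so $p+s',\,q+s'\in P$ since $P$ is a GFE. Now apply (T) not to $p,q$ but to the pair $p+s',\,q+s'$: there is some $t\in P$ with $t\ge p+s',\,q+s'$ and $\big(t-(p+s')\big)\wedge\big(t-(q+s')\big)=0$. Since $t\ge s'$, property (P) gives $t-s'\in P$; moreover $t-s'\ge p,q$ and $\big((t-s')-p\big)\wedge\big((t-s')-q\big)=0$. By the uniqueness clause of (T) for the original pair $p,q$ this forces $t-s'=r$, hence $s=r+t'\ge r$. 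Thus $r\le s$ for every upper bound $s$, i.e.\ $r=p\vee q$. This bypasses the need to produce the auxiliary infimum you were worried about: instead of descending from $s$ inside $[0,s]$, one translates the whole problem by $s'$ and invokes (T) there.
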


Observe that in \cite D this was proved for algebras of $S$-probabilities. In the first part of the following proof we follow these lines.

\begin{proof}[Proof of Theorem~\ref{th4}]
Let $P$ be a GFE and $p,q\in P$. First assume $P$ to have property (T). Let $r$ denote the unique element of $P$ existing according to property (T). Then $r\ge p,q$. Assume $s\in P$ and $s\ge p,q$. Since $p,q\perp s'$ we have $p+s',q+s'\in P$. Because of property (T) there exists some $t\in P$ satisfying
\[
t\ge p+s',q+s'\mbox{ and }\big(t-(p+s')\big)\wedge\big(t-(q+s')\big)=0.
\]
Now $t-s'\ge p,q$ and $\big((t-s')-p\big)\wedge\big((t-s')-q\big)=0$ showing $t-s'=r$ and hence $s=r+t'\ge r$. So we obtain $r=p\vee q$. According to \cite D, $P$ is a lattice. Conversely, assume $P$ to be a lattice. Then $p,q\le p\vee q$. Let $u\in P$ and assume $u\le(p\vee q)-p,(p\vee q)-q$. Then $p,q\le(p\vee q)-u$ and hence $p\vee q\le(p\vee q)-u$ whence $u=0$. This shows $\big((p\vee q)-p\big)\wedge\big((p\vee q)-q\big)=0$. Now assume that $v\in P$ satisfies $p,q\le v$ and $(v-p)\wedge(v-q)=0$. Then $p\vee q\le v$. Suppose $p\vee q<v$. Then $0<v-(p\vee q)\le v-p,v-q$ contradicting $(v-p)\wedge(v-q)=0$. This shows $v=p\vee q$, and hence $p\vee q$ is the unique element $r$ of $P$ satisfying $r\ge p,q$ and $(r-p)\wedge(r-q)=0$. This proves property (T).
\end{proof}

\begin{Corollary}
Let $P$ be an {\rm NGFE} of proper elements and of finite length having properties {\rm(P)}, {\rm(U)} and {\rm(T)}. Then $P$ is an algebra of numerical events that is a concrete logic, i.e.\ an orthomodular lattice that can be represented by sets. If $P$ is finite then $P$ is a Boolean algebra.
\end{Corollary}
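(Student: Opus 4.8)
The plan is to read off the conclusion from the results already proved, the only real work being the verification that $P$ satisfies axiom~(4). First I would use property~(P) together with Lemma~\ref{lem4} to conclude that $P$ is a GFE; in particular $P$ is then closed under the sum of two orthogonal elements (axiom~(3)). Next, since $P$ is an NGFE of finite length with property~(U), Proposition~\ref{prop2} applies, so every $p\in P\setminus\{0\}$ is the \emph{unique} sum of a set $A_p$ of atoms and $P$ is a concrete logic. Applying Lemma~\ref{lem3} to the top element gives $1=\sum_{a\in A_1}a$ with $A_1$ finite, and from property~(U) (taking $q=1$) one sees that every atom of $P$ lies in $A_1$; thus the atoms of $P$ are precisely the members of $A_1$.

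The crux is to turn pairwise orthogonality of atoms into membership of their sum, i.e.\ to verify condition~(ii) of the equivalence theorem for NGFEs of finite length stated above. Here I would exploit that $\sum_{a\in A_1}a=1$ holds as an identity of real functions: since each atom is nonnegative and these functions sum to the constant $1$, any two distinct atoms $a,b$ satisfy $a+b\le1$, hence $a\perp b$, and more generally every finite set of distinct atoms sums to a function $\le1$. (That pairwise orthogonal atoms are distinct uses the properness hypothesis, since an atom $a\ne0$ cannot satisfy $a\le1/2$, so $a\perp a$ is impossible.) Given pairwise orthogonal atoms $a_1,\dots,a_n$, an induction on $n$ now yields $a_1+\dots+a_n\in P$: the partition-of-unity bound gives $(a_1+\dots+a_k)+a_{k+1}\le1$, so $a_1+\dots+a_k$ is orthogonal to $a_{k+1}$, and axiom~(3) of the GFE produces the next partial sum inside $P$. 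This is exactly condition~(ii), so the equivalence theorem makes $P$ an algebra of numerical events, hence an orthomodular poset; Proposition~\ref{prop1} together with properness confirms that $P$ is an orthoposet.

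It then remains to identify the lattice and Boolean structure. Property~(T) with Theorem~\ref{th4} shows $P$ is a lattice, so $P$ is an orthomodular lattice, and being a concrete logic it is representable by sets. For the final assertion I would observe that, because all distinct atoms are mutually orthogonal, the induction above in fact gives $\sum_{a\in B}a\in P$ for \emph{every} subset $B\subseteq A_1$; combined with the uniqueness from Proposition~\ref{prop2} this yields a bijection between $P$ and the subsets of $A_1$ which respects the order (by~(U)) and the involution (from $p+p'=1$ and uniqueness one gets $A_{p'}=A_1\setminus A_p$, so $\,'$ corresponds to set complementation). Hence $P$ is isomorphic to the power set $2^{A_1}$, a Boolean algebra.

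The main obstacle is the passage from binary to finitary orthogonal sums, which fails for general GFEs (pairwise orthogonality does not force $p+q+r\le1$); what rescues it here is the concrete-logic representation, whose partition-of-unity property makes every family of distinct atoms jointly sub-normalised, so that axiom~(3) can be iterated. Everything else is bookkeeping over the cited lemmas. One should note in passing that finite length already forces $A_1$, and therefore $P$, to be finite, so the hypothesis ``$P$ finite'' in the last sentence is automatically met.
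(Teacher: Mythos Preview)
Your argument is correct but follows a different path from the paper's.

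The paper establishes that $P$ is an algebra of $S$-probabilities by first using (P) to get a GFE (Lemma~\ref{lem4}), then (T) to get a lattice (Theorem~\ref{th4}), then properness to get an ortholattice (Proposition~\ref{prop1}), and finally invoking an external result from \cite{D} stating that a GFE which is an ortholattice and satisfies (T) for orthogonal pairs is an algebra of $S$-probabilities. Property~(U) is only used afterwards, via Proposition~\ref{prop2}, to obtain the concrete-logic representation, and the Boolean conclusion in the finite case is again delegated to an external reference \cite{DDL}.

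You instead front-load property~(U): from Proposition~\ref{prop2} you extract the unique atomic decomposition and the partition of unity $\sum_{a\in A_1}a=1$, and use this pointwise identity to show directly that any finite family of pairwise orthogonal atoms has its sum in $P$, thereby verifying condition~(ii) of the NGFE equivalence theorem. Properness enters only to rule out $a\perp a$. Property~(T) and Theorem~\ref{th4} are then used at the end merely to upgrade the orthomodular poset to a lattice, and your Boolean argument is a direct order- and complement-preserving bijection with $2^{A_1}$ rather than a citation. Your route is more self-contained---it avoids both external references \cite{D} and \cite{DDL}---and, as you note, it actually shows that the finite-length hypothesis already forces $|P|=2^{|A_1|}<\infty$, so the clause ``if $P$ is finite'' is automatically satisfied. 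The paper's route is shorter on the page precisely because it outsources these two steps.
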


\begin{proof}
Referring to property (P), $P$ is a $GFE$ (Lemma~\ref{lem4}), because of property (T) $P$ is a lattice (Theorem~\ref{th4}) and since the elements of $P$ are proper, $P$ is a ortholattice (Proposition~\ref{prop1}). According to \cite D a GFE which is a ortholattice is an algebra of $S$-probabilities if and only if property (T) holds for all $p,q$ with $p\perp q$ which is guaranteed here even for all pairs $p,q\in P$. Hence $P$ is an algebra of $S$-probabilities which due to property (U) is a concrete logic (Proposition~\ref{prop3}). We point out that an algebra of $S$-probabilities is an orthomodular poset (cf.\ \cite{MT}); in case $P$ is finite, $P$ is even a Boolean algebra because by Proposition \ref{prop2} the representation of every $p\in P\setminus\{0\}$ as the sum of atoms is unique and as shown in \cite{DDL} for finite algebras of $S$-probabilities this implies that this algebra of $S$-probabilities is a Boolean algebra.  
\end{proof}  

\begin{Proposition}\label{prop3}
Let $P$ be a {\rm GSE} which is an ortholattice. Then $P$ is an algebra of $S$-probabilities and hence an orthomodular lattice if and only if for all $p,q\in P$, $p\le q$ implies $q-p=q\wedge p'$.
\end{Proposition}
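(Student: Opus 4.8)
The plan is to prove the equivalence by exploiting the interplay between the real-function difference $q-p$ and the lattice operations, and by reducing the harder implication to the characterisations of algebras among GFEs that are already recorded in the introduction.

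For the forward implication I assume $P$ is an algebra of $S$-probabilities and fix $p\le q$. Since algebras are GFEs, property (P) yields $q-p\in P$. I would first check that $q-p$ is a lower bound of $\{q,p'\}$: one has $q-p\le q$ because $p\ge0$, and $q-p\le1-p=p'$ because $q\le1$, so $q-p\le q\wedge p'$. For the reverse inequality I set $s:=q\wedge p'$; from $s\le p'$ I read off $p\perp s$, and here the defining feature of an algebra enters, namely that $p+s=p\vee s$ for orthogonal summands. Since both $p$ and $s$ lie below $q$, their supremum does too, so $p+s\le q$, i.e.\ $s\le q-p$. Combining the two inequalities gives $q-p=q\wedge p'$.

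For the converse I assume $q-p=q\wedge p'$ whenever $p\le q$. The first step is to note that $q\wedge p'\in P$ because $P$ is a lattice, so the hypothesis already delivers property (P); by Lemma~\ref{lem4} this makes $P$ a GFE. The decisive step is then to show that orthogonal elements add up to their supremum. Given $p\perp q$, that is $p\le q'$, I apply the hypothesis to the pair $p\le q'$ to get $q'-p=q'\wedge p'$. Taking complements and using that $P$ is an ortholattice, where the De Morgan law $(q'\wedge p')'=q\vee p$ holds, together with the elementary identity $1-(q'-p)=p+q$, I arrive at $p+q=p\vee q$. By the characterisation recalled in the introduction (\cite{DL18}), a GFE in which $p\perp q$ forces $p+q=p\vee q$ is an algebra of $S$-probabilities; and since an algebra is an orthomodular poset (\cite{MT}) that is here also a lattice, it is an orthomodular lattice.

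The main obstacle I anticipate is conceptual rather than computational: a GSE need not satisfy property (P), so in the converse direction the GFE structure cannot be taken for granted. The hypothesis must therefore be used twice — once to extract property (P), hence the GFE structure via Lemma~\ref{lem4}, and once, in the sharper form involving $q\wedge p'$, to turn orthogonal sums into suprema through De Morgan. Recognising that the single identity $q'-p=q'\wedge p'$ already encodes exactly the relation $p+q=p\vee q$ is the crux; the remainder reduces to the two easy inequalities of the forward direction and to quoting the known GFE characterisations.
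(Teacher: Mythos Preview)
Your proof is correct, and your converse direction is essentially identical to the paper's: use Lemma~\ref{lem4} to obtain a GFE, apply the hypothesis to $p\le q'$ and take complements via De~Morgan to get $p+q=(q'-p)'=(q'\wedge p')'=p\vee q$, then invoke the cited characterisation of algebras among GFEs. For the forward direction the paper simply cites \cite{DDL}, whereas you supply a short self-contained argument (the two easy inequalities $q-p\le q,p'$ and $q\wedge p'\le q-p$ via $p+(q\wedge p')=p\vee(q\wedge p')\le q$); this makes your proof more complete but is not a different route.
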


\begin{proof}
Assume $q-p=q\wedge p'$ for $p\le q$. Then $q-p\in P$ because $q\wedge p'\in P$ and therefore, due to Lemma~\ref{lem4}, $P$ is a GFE. Moreover, within this GFE $p\perp r$ for $p,r\in P$ entails $r'-p=r'\wedge p'$ from which we obtain $r+p=(r'-p)'=(r'\wedge p')'=r\vee p$. As already mentioned in the introduction, a GFE having the property that $p\perp r$ implies $p+r=p\vee r$ is an algebra of $S$-probabilities and hence an orthomodular lattice. Conversely, every algebra of $S$-probabilities which is a lattice is a GSE which is an ortholattice having the property that $p\le q$ implies $q-p=q\wedge p'$ (cf.\ \cite{DDL}).
\end{proof}

Next we deal with the atoms of a GFE which particularly in case of finite structures can be more conveniently used for algorithmic purposes.

\begin{Lemma}\label{lem5}
Let $P$ be a {\rm GFE} which is an orthomodular lattice and $\{a_1,\dots,a_n\}$ a set of pairwise orthogonal atoms of $P$. If, step by step from $m=2$ to $m=n$ {\rm(}which will ensure that the considered sums of atoms exist in $P${\rm)}, $(a_1+\dots+ a_m)\wedge(a_1+\dots+a_{m-1})'=a_m$, then $a_1+\dots+a_m=a_1\vee\cdots\vee a_m$.
\end{Lemma}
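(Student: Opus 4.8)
The plan is to argue by induction on $m$, using the orthomodular law to bridge the gap between the real-function sum $a_1+\dots+a_m$ and the lattice join $a_1\vee\cdots\vee a_m$. Throughout, write $s_m:=a_1+\dots+a_m$ for the sum of real functions. The base case $m=1$ is immediate, since $s_1=a_1$ is trivially the join of the single atom $a_1$.

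For the inductive step I would assume $s_{m-1}=a_1\vee\cdots\vee a_{m-1}$ and first verify that $s_m$ genuinely belongs to $P$. Because the atoms are pairwise orthogonal, $a_i\le a_m'$ for every $i<m$, and since $s_{m-1}$ is their join we obtain $s_{m-1}=a_1\vee\cdots\vee a_{m-1}\le a_m'$, that is, $s_{m-1}\perp a_m$. As $P$ is a GFE, this secures $s_m=s_{m-1}+a_m\in P$; this is precisely the existence that is guaranteed by building the sums up step by step, and it is also what allows the hypothesis $s_m\wedge s_{m-1}'=a_m$ to be meaningfully invoked at this stage.

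Next, since $a_m\ge0$ as a real function we have $s_{m-1}\le s_m$ pointwise, hence $s_{m-1}\le s_m$ in $P$. As $P$ is an orthomodular lattice, I would apply the orthomodular law to this comparable pair to get $s_m=s_{m-1}\vee(s_m\wedge s_{m-1}')$. Substituting the hypothesis $s_m\wedge s_{m-1}'=a_m$ collapses this to $s_m=s_{m-1}\vee a_m$, and the induction hypothesis then yields $s_m=(a_1\vee\cdots\vee a_{m-1})\vee a_m=a_1\vee\cdots\vee a_m$, which closes the induction.

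The calculation is short, so the point I would expect to be the crux is the conceptual one: recognizing that the hypothesis $s_m\wedge s_{m-1}'=a_m$ is exactly the input needed to make the orthomodular decomposition of $s_m$ over $s_{m-1}$ degenerate into a join. In a general GFE the sum of orthogonal elements need not coincide with their supremum, so the substance of the argument is that feeding this single equation into the orthomodular identity forces the running sum and the running join to agree at every stage; the surrounding bookkeeping — existence of the sums and the propagation of pairwise orthogonality to the accumulated join — is routine.
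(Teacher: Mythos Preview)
Your proof is correct and follows essentially the same approach as the paper: induction on the running index, using the GFE axiom together with $a_1\vee\cdots\vee a_{m-1}\le a_m'$ to ensure $s_m\in P$, then applying the orthomodular identity to $s_{m-1}\le s_m$ and substituting the hypothesis $s_m\wedge s_{m-1}'=a_m$. The only cosmetic difference is that the paper starts the induction at $m=2$ rather than $m=1$.
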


\begin{proof}
We use induction on $n$. For $n=2$, $a_1+a_2\in P$ because $P$ is a GFE, and from $a_1\le a_1+a_2$, the orthomodularity of $P$ and $(a_1+a_2)\wedge a_1'=a_2$ we infer that $a_1+a_2=a_1\vee\big((a_1+a_2)\wedge a_1'\big)=a_1\vee a_2$. Assuming $b:= a_1+\dots+a_{n-1}=a_1\vee\cdots\vee a_{n-1}$ then $b+a_n\in P$ because $a_1\vee\cdots\vee a_{n-1}\le a_n'$, and in the same way as for $n=2$ we obtain $a_1+\dots+a_n=b+a_n=b\vee\big((b+a_n)\wedge b'\big)=b\vee a_n=a_1\vee\cdots\vee a_n$.
\end{proof}

For a set $\{a_1,\dots,a_n\}$ of pairwise orthogonal atoms of a GFE which is an orthomodular lattice we define the following property:
\begin{enumerate}
\item[(SJ)] $(a_1+\dots+a_m)\wedge(a_1+\dots+a_{m-1})'=a_m$ for $m=2,\dots,n$,
\end{enumerate}
which according to Lemma~\ref{lem5} ensures that $a_1+\dots+a_m=a_1\vee\cdots\vee a_m$ for $m=2,\dots,n$.

We will say that a GFE $P$ which is an orthomodular lattice has the property $(SJ)$ if any set of pairwise orthgogonal elements of $P$ has property $(SJ)$. 

\begin{Theorem}
Let $P$ be {\rm GFE} of proper elements and of finite length which is an orthomodular lattice. Then $P$ is an algebra of $S$-probabilities if and only if it has property {\rm(SJ)}.
\end{Theorem}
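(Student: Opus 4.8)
The plan is to prove the two implications separately, in each case reducing to results already established. Throughout I would write a nonzero element as a sum of atoms by Lemma~\ref{lem3}, which applies since a GFE is an NGFE (Lemma~\ref{lem2}) and $P$ is of finite length.

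First I would treat the implication that an algebra of $S$-probabilities has property (SJ). Let $a_1,\dots,a_n$ be pairwise orthogonal atoms. In an algebra of $S$-probabilities the sum of finitely many pairwise orthogonal elements exists and equals their supremum; this is exactly the content of the (i) $\Rightarrow$ (ii) argument of the Theorem characterizing algebras of $S$-probabilities among NGFEs of finite length. Hence, setting $b:=a_1+\dots+a_{m-1}=a_1\vee\cdots\vee a_{m-1}$ and observing that $a_m\le a_i'$ for all $i<m$, so that $a_m\le b'$, I obtain $a_1+\dots+a_m=b+a_m=b\vee a_m$. It then remains to verify the orthomodular identity $(b\vee a_m)\wedge b'=a_m$ whenever $a_m\perp b$. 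Here $a_m\le(b\vee a_m)\wedge b'$ is clear, while $\big((b\vee a_m)\wedge b'\big)\wedge a_m'=(b\vee a_m)\wedge(b\vee a_m)'=0$ by De~Morgan; the orthomodular law applied to $a_m\le(b\vee a_m)\wedge b'$ then forces $(b\vee a_m)\wedge b'=a_m$. This is precisely condition (SJ).

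For the converse, suppose $P$ has property (SJ). Since $P$ is a GFE of finite length, hence an NGFE of finite length (Lemma~\ref{lem2}), by the Theorem characterizing algebras of $S$-probabilities among such structures it suffices to show that every sum of finitely many pairwise orthogonal atoms of $P$ lies in $P$. But for pairwise orthogonal atoms $a_1,\dots,a_n$ property (SJ) supplies exactly the step-by-step meet conditions required by Lemma~\ref{lem5}, which then yields $a_1+\dots+a_m=a_1\vee\cdots\vee a_m\in P$ for each $m$, and in particular $a_1+\dots+a_n\in P$. Thus condition (ii) of that Theorem holds and $P$ is an algebra of $S$-probabilities. Equivalently, one may verify axiom~(4) of Definition~\ref{def1} directly: given $p\perp q\perp r\perp p$, decompose $p,q,r$ into atoms by Lemma~\ref{lem3}; the relations $p+q,q+r,r+p\le1$ make all these atoms pairwise orthogonal, and (SJ) via Lemma~\ref{lem5} places their total sum $p+q+r$ in $P$.

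The bulk of the novelty lies in the first direction, and I expect the main obstacle there to be the clean passage from sums to suprema together with the orthomodular identity $(b\vee a_m)\wedge b'=a_m$; the converse is essentially a repackaging of the NGFE Theorem once Lemma~\ref{lem5} has been invoked, so it should be routine.
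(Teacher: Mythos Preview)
Your proposal is correct and follows essentially the same line as the paper: in the forward direction both arguments identify $a_1+\cdots+a_m$ with $a_1\vee\cdots\vee a_m$ and then extract the meet condition from orthomodularity (you via $(b\vee a_m)\wedge b'\wedge a_m'=0$, the paper via cancellation in the equation $c+\big((c\vee b_m)\wedge c'\big)=c+b_m$), and in the converse both decompose into atoms via Lemma~\ref{lem3} and use (SJ) through Lemma~\ref{lem5} to identify sums with joins. The only cosmetic difference is that for the converse you finish by invoking the NGFE theorem (condition~(ii)), whereas the paper finishes by verifying $p+q=p\vee q$ for $p\perp q$ and citing the criterion from the introduction; both reductions rest on the same atom decomposition and are equivalent here.
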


\begin{proof}
First assume $P$ to be an algebra of $S$-probabilities, let $b_1,\dots,b_n$ be pairwise orthogonal elements of $P$ and assume $2\le m\le n$. It is well-known and easy to see that $b_1+b_2=b_1\vee b_2$ and, by induction on $n$, that $b_1+\dots+b_n=b_1\vee\cdots\vee b_n$. Now we have
\begin{eqnarray*}
& & (b_1\vee\cdots\vee b_{m-1})+\big((b_1\vee\cdots\vee b_m)\wedge(b_1\vee\cdots\vee b_{m-1})'\big)= \\
& & =(b_1\vee\cdots\vee b_{m-1})\vee\big((b_1\vee\cdots\vee b_m)\wedge(b_1\vee\cdots\vee b_{m-1})'\big)=b_1\vee\cdots\vee b_m= \\
& & =(b_1\vee\cdots\vee b_{m-1})+a_m
\end{eqnarray*}
which proves property (SJ). Conversely, assume $P$ to have property (SJ). Referring to Lemma~\ref{lem3}, every element of $P$ is the sum of atoms of $P$. These atoms are pairwise orthogonal because the sum of every two of them is $\le1$. If $p\perp q$ for $p,q\in P$ with the representation as sums of atoms as
\begin{eqnarray*}
& & p=a_1+\dots+a_m, \\
& & q=a_{m+1}+\dots+a_{m+t},
\end{eqnarray*}
then $p+q=a_1+\dots+a_m+a_{m+1}+\dots+a_{m+t}$ where all $a_i$ are pairwise different (because $2a\le1$ for a common atom $a$ cannot occur) and pairwise orthogonal. Therefore $p+q=p\vee q$, from which we can infer that $P$ is an algebra of $S$-probabilities, as already stated in the introduction.
\end{proof}

{}

Authors' addresses:

Dietmar Dorninger \\
TU Wien \\
Faculty of Mathematics and Geoinformation \\
Institute of Discrete Mathematics and Geometry \\
Wiedner Hauptstra\ss e 8-10 \\
1040 Vienna \\
Austria

Helmut L\"anger \\
TU Wien \\
Faculty of Mathematics and Geoinformation \\
Institute of Discrete Mathematics and Geometry \\
Wiedner Hauptstra\ss e 8-10 \\
1040 Vienna \\
Austria, and \\
Palack\'y University Olomouc \\
Faculty of Science \\
Department of Algebra and Geometry \\
17.\ listopadu 12 \\
771 46 Olomouc \\
Czech Republic \\
helmut.laenger@tuwien.ac.at

\end{document}